\newtheorem{remark}{Remark}
\newtheorem{defin}{Definition}
\newtheorem{corollary}{Corollary}
\newcommand{\non}{\nonumber}
\title{A Dynamic Grouping Strategy for Beyond Diagonal Reconfigurable Intelligent Surfaces with Hybrid Transmitting and Reflecting Mode}
\author{Hongyu Li, \IEEEmembership{Student Member, IEEE}, Shanpu Shen, \IEEEmembership{Senior Member, IEEE}, and Bruno Clerckx, \IEEEmembership{Fellow, IEEE}
\thanks{Manuscript received October 5, 2022; revised February 26, 2023, and May 7, 2023; accepted  June 18, 2023. This work was supported by Hong Kong Research Grants Council through the Collaborative Research Fund under Grant C6012-20G. The associate editor coordinating the review of this article was Dr. Hongliang Zhang. \textit{(Corresponding author: Shanpu Shen).}}
\thanks{H. Li is with the Department of Electrical and Electronic Engineering, Imperial College London, London SW7 2AZ, U.K. (e-mail: c.li21@imperial.ac.uk).}\\
\thanks{S. Shen is with the Department of Electronic and Computer Engineering, The Hong Kong University of Science and Technology, Clear Water Bay, Kowloon, Hong Kong (e-mail: sshenaa@connect.ust.hk).}
\thanks{B. Clerckx is with the Department of Electrical and Electronic Engineering, Imperial College London, London SW7 2AZ, U.K. and with Silicon Austria Labs (SAL), Graz A-8010, Austria (e-mail: b.clerckx@imperial.ac.uk; bruno.clerckx@silicon-austria.com).}}
\begin{document}

\maketitle
\thispagestyle{empty}
\begin{abstract}
    Beyond diagonal reconfigurable intelligent surface (BD-RIS) is a novel branch of RIS which breaks through the limitation of conventional RIS with diagonal scattering matrices. 
    However, the existing research focuses on BD-RIS with fixed architectures regardless of channel state information (CSI), which limit the achievable performance of BD-RIS.
    To solve this issue, in this paper, we propose a novel dynamically group-connected BD-RIS based on a dynamic grouping strategy. Specifically, RIS antennas are dynamically divided into several subsets adapting to the CSI, yielding a permuted block-diagonal scattering matrix. 
    To verify the effectiveness of the proposed dynamically group-connected BD-RIS, we propose an efficient algorithm to optimize the BD-RIS with dynamic grouping for a BD-RIS-assisted multi-user multiple-input single-output system. 
    Simulation results show that the proposed dynamically group-connected architecture outperforms fixed group-connected architectures. 
\end{abstract}

\begin{IEEEkeywords}
Beyond diagonal reconfigurable intelligent surface (BD-RIS), dynamic grouping.
\end{IEEEkeywords}

\vspace{-0.2 cm}
\section{Introduction}
\vspace{-0.2 cm}

Reconfigurable intelligent surface (RIS) has emerged as a revolutionary technique enabling spectrum, cost, and energy efficient communications \cite{wu2019towards,elmossallamy2020reconfigurable}. 
However, most existing works focus on a simple RIS architecture, where each RIS element is connected to a load disconnected from the other elements, yielding mathematically a diagonal phase shift matrix. This architecture enables the phase shift control of the incident waves and can only support the signal reflection, which limits the beam control accuracy and coverage of RIS.

\begin{figure}
    \centering
    \includegraphics[height=1.75 in]{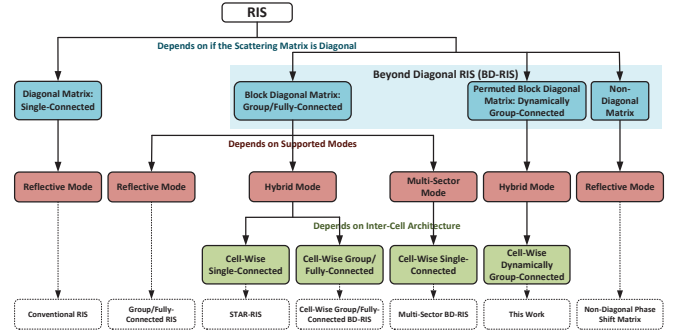}
    \caption{RIS classification tree.}\label{fig:RIS_tree}\vspace{-0.5 cm}
\end{figure}

To break through the limitation of conventional RIS, a novel branch namely beyond diagonal RIS (BD-RIS) as illustrated in Fig. \ref{fig:RIS_tree}, has been recently proposed and investigated. 
The BD-RIS relies on different circuit topologies of antenna ports and thus has scattering matrices beyond diagonal matrices.
Specifically, modeling and architecture design of RIS and microwave theory was first bridged in \cite{shen2021modeling}. 
More general group- and fully-connected architectures were proposed in \cite{shen2021modeling}, leading to block-diagonal scattering matrices and enabling more flexible dual phase shift and amplitude control of the impinging waves.  
Furthermore, optimal design for group/fully-connected BD-RIS was proposed in \cite{nerini2022optimal} and group/fully-connected BD-RIS with discrete-value impedance networks was investigated in \cite{nerini2021reconfigurable}. 
Another novel architecture with a non-diagonal phase shift matrix was proposed in \cite{li2022reconfigurable}. 
However, the BD-RIS models proposed in \cite{shen2021modeling,nerini2022optimal,nerini2021reconfigurable,li2022reconfigurable} are restricted to the reflective mode, that is signals can only be reflected to one side of the surface, resulting a ``waste'' of space.   
To realize a full-space coverage, simultaneously transmitting and reflecting RIS (STAR-RIS) \cite{xu2021simultaneously} or intelligent omni-surface (IOS) \cite{zhang2020beyond} supporting the hybrid mode has been proposed recently as an important extension of RIS. 
The STAR-RIS/IOS is then first practically implemented in \cite{zeng2022intelligent}.
To go deep into the essence of STAR-RIS, a BD-RIS model unifying modes (reflective/transmissive/hybrid) and architectures (single/group/fully-connected) has been established in \cite{li2022generalized}, showing that STAR-RIS is essentially a particular instance of group-connected architecture when group size is equal to 2. 
The work in \cite{li2022generalized} was further extended to a multi-sector mode, which was a generalization of hybrid mode achieving not only full-space coverage, but also significant performance enhancement \cite{li2022beyond}.

The limitation of the proposed BD-RIS architectures in \cite{shen2021modeling,li2022generalized,li2022beyond} is that they are restricted to fixed grouping strategies, where all RIS antennas are uniformly partitioned and adjacently grouped regardless of channel state information (CSI). 
However, exploring different grouping strategies of BD-RIS still remains an open problem.  
To solve this issue, in this paper, we, for the first time, propose a dynamically group-connected BD-RIS model, 
which is a novel branch of BD-RIS as shown in Fig. \ref{fig:RIS_tree}.
Specifically, the proposed model is based on a dynamic grouping strategy, whose main idea is to dynamically partition RIS antennas into multiple groups with variable group sizes adapting to CSI. 
The contributions of this paper are summarized as follows.

\textit{First}, we derive the mathematical model and discuss the practical implementation of the proposed dynamically group-connected BD-RIS.
\textit{Second}, we apply the dynamically group-connected BD-RIS in a multi-user multiple-input multiple-output (MU-MISO) system and propose a novel and efficient algorithm to optimize the BD-RIS with the dynamic grouping strategy.
\textit{Third}, we provide simulation results to evaluate the performance enhancement of the proposed dynamically group-connected BD-RIS compared to fixed group-connected cases.

\textit{Notations}:
Boldface lower- and upper-case letters indicate column vectors and matrices, respectively.
$(\cdot)^T$ and $(\cdot)^H$ denote transpose and conjugate-transpose operations, respectively.
$\Re \{ \cdot \}$ denotes the real part of a complex number.
$\mathbf{I}_L$ denotes an $L \times L$ identity matrix.
$\| \mathbf{A} \|_F$ denotes the Frobenius norm of $\mathbf{A}$.
$|\mathcal{A}|$ denotes the size of set $\mathcal{A}$.
$\mathsf{blkdiag}(\cdot)$ denotes a block-diagonal matrix.
$\mathsf{Tr}(\cdot)$ denotes the trace of a matrix.

\section{Dynamically Group-Connected BD-RIS Model}

\begin{figure}
    \centering
    \includegraphics[width = 0.47\textwidth]{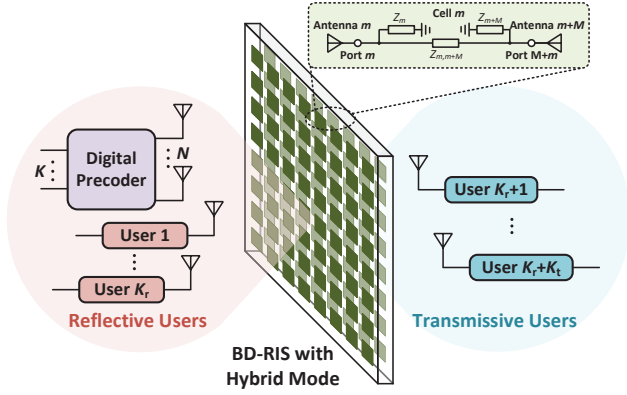}
    \caption{Diagram of BD-RIS supporting the hybrid mode.}\label{fig:sys_mod}\vspace{-0.3 cm}
\end{figure}

As illustrated in \cite{li2022generalized} and also in Fig. \ref{fig:RIS_tree}, the hybrid reflective and transmissive mode of the RIS is essentially based on the group-connected reconfigurable impedance network. More specifically, for a BD-RIS working on the hybrid mode, each two antenna ports are connected to each other, thereby constructing one ``cell'' as illustrated in Fig. \ref{fig:sys_mod}. Within each cell, two antennas are back to back placed such that each antenna covers half space.  
For an $M$-cell BD-RIS, we assume antenna ports $m$ and $M+m$ are connected to each other to support the hybrid mode. Mathematically, we define $\mathbf{\Phi}_\mathrm{r} \in \mathbb{C}^{M\times M}$ and $\mathbf{\Phi}_\mathrm{t} \in \mathbb{C}^{M\times M}$ as BD-RIS matrices in charge of users from two sides of the hybrid BD-RIS, which are sub-matrices of the scattering matrix $\mathbf{\Phi}\in\mathbb{C}^{2M\times 2M}$ for the $2M$-port reconfigurable impedance network\footnote{The scattering matrix $\mathbf{\Phi}\in\mathbb{C}^{2M\times 2M}$ for the $2M$-port reconfigurable impedance network is determined by the impedance matrix $\mathbf{Z}\in\mathbb{C}^{2M\times 2M}$ of the impedance network, that is $\mathbf{\Phi} = (\mathbf{Z}+Z_0\mathbf{I}_{2M})^{-1}(\mathbf{Z}-Z_0\mathbf{I}_{2M})$. $\mathbf{Z}$ can be constructed by impedance values $Z_{m,n}$ as detailed in \cite{shen2021modeling}.}, that is
\begin{equation}
    \mathbf{\Phi}_\mathrm{r} = [\mathbf{\Phi}]_{1:M,1:M}, ~ \mathbf{\Phi}_\mathrm{t} = [\mathbf{\Phi}]_{M+1:2M,1:M}.
\end{equation} 
They are constrained by \cite{li2022generalized}
\begin{equation}
    \mathbf{\Phi}_\mathrm{r}^H\mathbf{\Phi}_\mathrm{r} + \mathbf{\Phi}_\mathrm{t}^H\mathbf{\Phi}_\mathrm{t} = \mathbf{I}_M.
    \label{eq:RIS_constraint}
\end{equation}  
Constraint (\ref{eq:RIS_constraint}) is a general description without specifying the  mathematical characteristics of $\mathbf{\Phi}_\mathrm{r}$ and $\mathbf{\Phi}_\mathrm{t}$, both of which depend on the architectures of BD-RIS cells. In the following subsections, we will propose a novel and unified cell-wise dynamically group-connected (CW-DGC) architecture of BD-RIS based on a dynamic grouping strategy and derive the corresponding constraints of $\mathbf{\Phi}_\mathrm{r}$ and $\mathbf{\Phi}_\mathrm{t}$. Then we will discuss the practical implementation of the proposed architecture.

\vspace{-0.4 cm}

\subsection{Mathematical Model}
In this work, we propose a CW-DGC architecture of BD-RIS by adapting the grouping strategy to channel environments. 
Different from the cell-wise group-connected (CW-GC) architecture proposed in \cite{li2022generalized}, where $M$ cells are uniformly divided into $G$ subsets and every adjacent $\bar{M} = M/G$ cells are connected to each other, CW-DGC is based on a dynamic grouping strategy, which is embodied in two aspects: 1) The number of cells in each group can be different; 2) the positions/indexes of cells in the same group are not restricted to adjacent ones. To model this dynamic grouping strategy, we first present the following definition. 
\begin{defin}
    Define $G$ subsets $\mathcal{D}_1, \ldots, \mathcal{D}_G$ which store indexes of RIS cells for each group. These subsets satisfy the following constraints:
    \begin{subequations}
        \label{eq:dynamic_grouping}
        \begin{align}
            &\mathcal{D}_g \ne \varnothing, \forall g \in \mathcal{G} = \{1,\ldots,G\},\\
            &\mathcal{D}_p \cap \mathcal{D}_q = \varnothing, \forall p \ne q, p, q \in \mathcal{G},\\        
            &\cup_{g = 1}^G \mathcal{D}_g = \mathcal{M} = \{1,\ldots,M\}, 
        \end{align}
    \end{subequations}
    which indicate 1) that each group contains at lease one cell, 2) that each BD-RIS cell can only be mapped into one group, and 3) that there is no overlap among different groups\footnote{Herein we provide a simple example to explain Definition 1. For a 4-cell BD-RIS with 2 groups, 2 possible grouping results could be $\mathcal{D}_1 = \{1,3\}$, $\mathcal{D}_2 = \{2,4\}$ or $\mathcal{D}_1 = \{2\}$, $\mathcal{D}_2 = \{1,3,4\}$.}. 
\end{defin}

Based on Definition 1, we have the following corollary.  
\begin{corollary}
    The CW-DGC BD-RIS matrices, $\mathbf{\Phi}_\mathrm{r}$ and $\mathbf{\Phi}_\mathrm{t}$, should satisfy the following constraints\footnote{In this work, we consider the hybrid mode of BD-RIS to demo the proposed CW-DGC architecture. However, the CW-DGC architecture can also support multi-sector mode and the illustrated model can be easily extended to the multi-sector case.}
    \begin{subequations}
        \label{eq:dynamic_constraint}
        \begin{align}
            \label{eq:dynamic_constraint_a}
        &[\mathbf{\Phi}_\mathrm{t}]_{m,n} = 0, \forall m \in \mathcal{D}_p, \forall n \in \mathcal{D}_q, \forall p \ne q, p,q\in\mathcal{G},\\
        \label{eq:dynamic_constraint_b}
        &[\mathbf{\Phi}_\mathrm{r}]_{m,n} = 0, \forall m \in \mathcal{D}_p, \forall n \in \mathcal{D}_q, \forall p \ne q,p,q\in\mathcal{G},\\
        \label{eq:dynamic_constraint_c}
        &\mathbf{\Phi}_{\mathrm{t},\mathcal{D}_g}^H\mathbf{\Phi}_{\mathrm{t}, \mathcal{D}_g} + \mathbf{\Phi}_{\mathrm{r},\mathcal{D}_g}^H\mathbf{\Phi}_{\mathrm{r},\mathcal{D}_g} = \mathbf{I}_{|\mathcal{D}_g|}, \forall g\in\mathcal{G},
        \end{align}
    \end{subequations}
    where $\mathbf{\Phi}_{\mathrm{t/r},\mathcal{D}_g}$ is a sub-matrix of $\mathbf{\Phi}_{\mathrm{t/r}}$ which selects rows and columns of $\mathbf{\Phi}_{\mathrm{t/r}}$ according to indexes in the set $\mathcal{D}_g$,$\forall g\in\mathcal{G}$. 
\end{corollary}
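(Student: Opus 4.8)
The plan is to trace the zero pattern and the unitarity of the full $2M\times 2M$ scattering matrix $\mathbf{\Phi}$ back through the impedance-network description. First I would observe that in the CW-DGC architecture the $2M$-port reconfigurable impedance network splits into $G$ mutually isolated sub-networks: since cell $m$ is realized by the antenna-port pair $(m,M+m)$ and cells are interconnected only within a common group, the $g$-th sub-network comprises exactly the $2|\mathcal{D}_g|$ ports in $\mathcal{D}_g\cup(M+\mathcal{D}_g)$, where $M+\mathcal{D}_g\eqdef\{M+m:m\in\mathcal{D}_g\}$. Letting $\mathbf{P}$ be the permutation matrix that collects these ports group by group, the impedance matrix built from the $Z_{m,n}$'s as in \cite{shen2021modeling} becomes block diagonal, $\mathbf{P}\mathbf{Z}\mathbf{P}^T=\mathsf{blkdiag}(\mathbf{Z}_1,\ldots,\mathbf{Z}_G)$ with $\mathbf{Z}_g\in\mathbb{C}^{2|\mathcal{D}_g|\times 2|\mathcal{D}_g|}$, because currents injected at the ports of one group induce no voltages on the ports of another group.

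Next I would push this block structure through the Cayley-type map $\mathbf{\Phi}=(\mathbf{Z}+Z_0\mathbf{I}_{2M})^{-1}(\mathbf{Z}-Z_0\mathbf{I}_{2M})$. Since adding a scalar multiple of the identity, inverting, and multiplying all preserve a fixed block-diagonal pattern, $\mathbf{P}\mathbf{\Phi}\mathbf{P}^T=\mathsf{blkdiag}(\mathbf{\Phi}_1,\ldots,\mathbf{\Phi}_G)$ with $\mathbf{\Phi}_g=(\mathbf{Z}_g+Z_0\mathbf{I}_{2|\mathcal{D}_g|})^{-1}(\mathbf{Z}_g-Z_0\mathbf{I}_{2|\mathcal{D}_g|})$; equivalently $[\mathbf{\Phi}]_{i,j}=0$ whenever ports $i$ and $j$ belong to different groups. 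Then (\ref{eq:dynamic_constraint_a}) follows because $[\mathbf{\Phi}_\mathrm{t}]_{m,n}=[\mathbf{\Phi}]_{M+m,n}$, and for $m\in\mathcal{D}_p$, $n\in\mathcal{D}_q$ with $p\ne q$ the port $M+m$ lies in group $p$ while the port $n$ lies in group $q$; likewise (\ref{eq:dynamic_constraint_b}) follows from $[\mathbf{\Phi}_\mathrm{r}]_{m,n}=[\mathbf{\Phi}]_{m,n}$ with $m,n$ in distinct groups.

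For (\ref{eq:dynamic_constraint_c}) I would invoke losslessness, $\mathbf{\Phi}^H\mathbf{\Phi}=\mathbf{I}_{2M}$, so that each diagonal block is unitary, $\mathbf{\Phi}_g^H\mathbf{\Phi}_g=\mathbf{I}_{2|\mathcal{D}_g|}$. Ordering the ports inside group $g$ as the reflective-side ports $\mathcal{D}_g$ followed by the transmissive-side ports $M+\mathcal{D}_g$ makes the top-left and bottom-left sub-blocks of $\mathbf{\Phi}_g$ equal to $\mathbf{\Phi}_{\mathrm{r},\mathcal{D}_g}$ and $\mathbf{\Phi}_{\mathrm{t},\mathcal{D}_g}$, respectively; this matches the ``select rows and columns in $\mathcal{D}_g$'' definition precisely because the two ports of a cell share the same cell index. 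Reading off the $(1,1)$ block of $\mathbf{\Phi}_g^H\mathbf{\Phi}_g=\mathbf{I}$ gives $\mathbf{\Phi}_{\mathrm{r},\mathcal{D}_g}^H\mathbf{\Phi}_{\mathrm{r},\mathcal{D}_g}+\mathbf{\Phi}_{\mathrm{t},\mathcal{D}_g}^H\mathbf{\Phi}_{\mathrm{t},\mathcal{D}_g}=\mathbf{I}_{|\mathcal{D}_g|}$. Alternatively, once (\ref{eq:dynamic_constraint_a})--(\ref{eq:dynamic_constraint_b}) hold, $\mathbf{\Phi}_\mathrm{r}$ and $\mathbf{\Phi}_\mathrm{t}$ are themselves permuted-block-diagonal with diagonal blocks $\mathbf{\Phi}_{\mathrm{r},\mathcal{D}_g}$ and $\mathbf{\Phi}_{\mathrm{t},\mathcal{D}_g}$, and restricting (\ref{eq:RIS_constraint}) to the rows and columns indexed by $\mathcal{D}_g$ yields the same identity.

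The routine parts are the linear-algebra facts that block-diagonal structure survives the Cayley map and that unitarity descends to diagonal blocks. The one step that needs care — and which I expect to be the main obstacle — is the index bookkeeping: showing that the dynamic, non-adjacent, variable-size grouping still admits a single permutation $\mathbf{P}$ that simultaneously aligns, for every $g$, the reflective-side ports $\mathcal{D}_g$ and the transmissive-side ports $M+\mathcal{D}_g$ into one contiguous block of size $2|\mathcal{D}_g|$, and that the resulting diagonal block is exactly the principal sub-matrix indexed by $\mathcal{D}_g$ appearing in (\ref{eq:dynamic_constraint_c}). This is where both the cell structure (ports $m$ and $M+m$ carrying the common index $m$) and Definition 1 (the $\mathcal{D}_g$ forming a genuine partition of $\mathcal{M}$) are essential.
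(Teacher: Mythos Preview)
Your proposal is correct and follows essentially the same logic as the paper's proof: the circuit topology forces a (permuted) block-diagonal scattering matrix, giving (\ref{eq:dynamic_constraint_a})--(\ref{eq:dynamic_constraint_b}), and the per-group constraint (\ref{eq:dynamic_constraint_c}) then follows from restricting (\ref{eq:RIS_constraint}) to each block (your ``alternative'' route is exactly the paper's argument). The paper's appendix is terser and defers the zero-pattern mechanism to the cited circuit-topology analysis of \cite{shen2021modeling} and \cite{li2022generalized}, whereas you spell it out explicitly via the port-level permutation $\mathbf{P}$ and the Cayley map on $\mathbf{Z}$; but the underlying argument is the same.
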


\begin{proof}
    Please refer to the Appendix. 
\end{proof}

\begin{remark}
Constraint (\ref{eq:dynamic_constraint}) indicates that  $\mathbf{\Phi}_\mathrm{r}$ and $\mathbf{\Phi}_\mathrm{t}$ of the CW-DGC BD-RIS are essentially permuted block-diagonal matrices. Specifically, when the BD-RIS has a CW-GC architecture \cite{li2022generalized}, the $\bar{M}$ cells within a fixed group $\mathcal{G}_g = \{(g-1)\bar{M} + 1, \ldots,g\bar{M}\}$, $\forall g\in\mathcal{G}$ are connected to each other. In this case, we have $[\mathbf{\Phi}_\mathrm{r}]_{m,n} \ne 0$, $[\mathbf{\Phi}_\mathrm{t}]_{m,n} \ne 0$, $\forall m,n\in\mathcal{G}_g$, and the resulting $\mathbf{\Phi}_\mathrm{r}$ and $\mathbf{\Phi}_\mathrm{t}$ are both block diagonal matrices.
\end{remark}

\begin{figure}
    \centering
    \includegraphics[width = 0.48\textwidth]{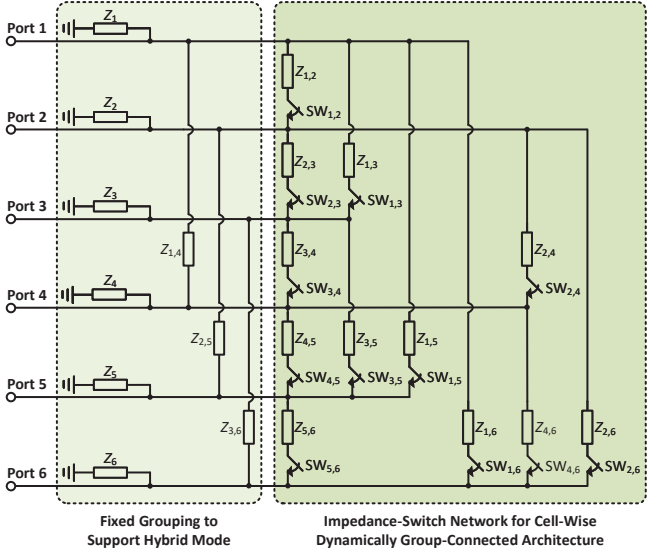}
    \caption{Example of a 3-cell BD-RIS with hybrid mode and CW-DGC architecture realized by an impedance-switch network.}\label{fig:Group_Stra}\vspace{-0.5 cm}
\end{figure}

\vspace{-0.3 cm}

\subsection{Implementation and Discussion}
To show the feasibility of the proposed CW-DGC architecture, we employ an impedance-switch network, which enables a joint control of impedance components and the ON/OFF states of switches. 
Specifically, impedance components can be realized by using tunable capacitance and inductance, such as using varactors. 
Fig. \ref{fig:Group_Stra} provides an example of the realization of CW-DGC BD-RIS with $M = 3$. Specifically, ports $m$ and $M+m$ are connected to each other by an impedance to support the hybrid mode (as shown in the left-hand side of Fig. \ref{fig:Group_Stra}), while ports $m$ and $m'$, $m'\ne M+m$, are connected by the series of one impedance and one switch to support the dynamic grouping (as shown in the right-hand side of Fig. \ref{fig:Group_Stra}). 
Therefore, to realize an $M$-cell BD-RIS with hybrid mode and the CW-DGC architecture, $M(2M+1)$ impedance components and $2M(M-1)$ switches are required. 
To realize the dynamic grouping, when cells $m$ and $n$, $m<n$, are in the same group, the $(m,n)$-th, $(m,M+n)$-th, $(n,M+m)$-th, and $(M+m,M+n)$-th switches are turned ON and the corresponding impedance components are tuned based on CSI.

\begin{remark}
    Although the number of connected links, that is the number of connections between every two BD-RIS antennas, of the impedance-switch network as illustrated in Fig. \ref{fig:Group_Stra} scales with $M^2$, the activated links, that is the actual number of connections between RIS antennas with given CSI and $G$, of the CW-DGC architecture could be much smaller than the total number of connected links. 
    Based on the above implementation, with a given number of groups $G$, the number of activated links for CW-DGC BD-RIS is $\sum_{g=1}^G|\mathcal{D}_g|(2|\mathcal{D}_g| + 1)$.
\end{remark}

\begin{remark}
    The impedance-switch network as illustrated in Section II-B is a general framework, which enables cell-wise fully/group/single-connected (CW-FC/GC/SC) architectures in \cite{li2022generalized}. Specifically, the CW-FC architecture is realized when all switches are turned ON, while the CW-SC architecture is realized when all switches are turned OFF. This finding indicates that different architectures of BD-RIS could share the same realization, which is preferable for the implementation of BD-RIS in practice.
\end{remark}

In the following section, we will apply the CW-DGC BD-RIS in a MU-MISO system and propose efficient algorithms to optimize the BD-RIS matrices with dynamic grouping strategy.

\section{Dynamically Group-Connected BD-RIS Design}

\subsection{BD-RIS-Aided MU-MISO with Dynamic Grouping Strategy}

To demonstrate the advantage of the proposed CW-DGC architecture, we apply the CW-DGC BD-RIS with the hybrid mode into a MU-MISO system as shown in Fig. \ref{fig:sys_mod}, where an $N$-antenna base station (BS) serves $K$ single-antenna users with the assistance of an $M$-cell BD-RIS. 
Among $K$ users, $K_\mathrm{r}$ users, namely reflective users, $\mathcal{K}_\mathrm{r} = \{1,\ldots,K_\mathrm{r}\}$, are located at one side of the BD-RIS, and $K_\mathrm{t} = K - K_\mathrm{r}$ users, namely transmissive users, $\mathcal{K}_\mathrm{t} = \{K_\mathrm{r} + 1,\ldots,K\}$, are located at the other side, $\mathcal{K} = \mathcal{K}_\mathrm{t}\cup \mathcal{K}_\mathrm{r}$. 
Define $\mathbf{s} \triangleq [s_1, \ldots, s_K]^T \in \mathbb{C}^{K}$ as the transmit symbol vector, $\mathbb{E}\{\mathbf{s}\mathbf{s}^H\} = \mathbf{I}_{K}$, and $\mathbf{W} \triangleq [\mathbf{w}_{1}, \ldots, \mathbf{w}_{K}] \in \mathbb{C}^{N\times K}$ as the transmit precoder. The received signal at the user side is
\begin{equation}
    \label{eq:receive_signal}
    \begin{aligned}
        y_{k} = \tilde{\mathbf{h}}_{k}^H\mathbf{w}_{k}s_{k} + \tilde{\mathbf{h}}_{k}^H\sum_{p \in \mathcal{K}, p \ne k}\mathbf{w}_{p}s_{p}+ n_{k}, \forall k \in \mathcal{K},
    \end{aligned}
\end{equation}
where $\tilde{\mathbf{h}}_{k} = (\mathbf{h}_{k}^H\mathbf{\Phi}_i \mathbf{G})^H$, $\forall k \in\mathcal{K}_i$, $\forall i \in\{\mathrm{t,r}\}$, $\mathbf{h}_{k} \in \mathbb{C}^{M}$, $\forall k \in \mathcal{K}$ is the channel vector between the RIS and each user, $\mathbf{G} \in \mathbb{C}^{M\times N}$ is the channel matrix between the BS and the RIS, and $n_{k} \sim \mathcal{CN}(0, \sigma_{k}^2)$, $\forall k \in \mathcal{K}$ is the noise.

Aiming at jointly optimizing the precoder and BD-RIS matrix to maximize the sum-rate, we have
\begin{subequations}
    \label{eq:problem0}
    \begin{align}
    \label{eq:obj0}
    \max_{\mathbf{W}, \mathbf{\Phi}_\mathrm{t}, \mathbf{\Phi}_\mathrm{r}} ~~ &\sum_{k\in\mathcal{K}} \log_2\left(1 + \frac{|\tilde{\mathbf{h}}_{k}^H \mathbf{w}_{k}|^2}{
        \sum_{\substack{p \in \mathcal{K} \\ p\ne k}}|\tilde{\mathbf{h}}_{k}^H \mathbf{w}_{p}|^2 + \sigma_{k}^2}\right)\\
    \label{eq:p0_b}
    \mathrm{s.t.} ~~~~&\mathcal{D}_p \cap \mathcal{D}_q = \varnothing, \forall p \ne q,\\
    \label{eq:p0_c}
    &\mathcal{D}_g \ne \varnothing, \forall g,\\   
    \label{eq:p0_d}
    &\cup_{g = 1}^G \mathcal{D}_g = \mathcal{M}, \\
    \label{eq:p0_e}
    &[\mathbf{\Phi}_\mathrm{t}]_{m,n} = 0, \forall m \in \mathcal{D}_p, \forall n \in \mathcal{D}_q, p \ne q,\\
    \label{eq:p0_e1}
    &[\mathbf{\Phi}_\mathrm{r}]_{m,n} = 0, \forall m \in \mathcal{D}_p, \forall n \in \mathcal{D}_q, p \ne q,\\
    \label{eq:p0_f}
    &\mathbf{\Phi}_{\mathrm{t},\mathcal{D}_g}^H\mathbf{\Phi}_{\mathrm{t}, \mathcal{D}_g} + \mathbf{\Phi}_{\mathrm{r},\mathcal{D}_g}^H\mathbf{\Phi}_{\mathrm{r},\mathcal{D}_g} = \mathbf{I}_{|\mathcal{D}_g|}, \forall g,\\
    \label{eq:p0_g}
    &\|\mathbf{W}\|_F^2 \le P,
    \end{align}
\end{subequations}
where $P$ is the total transmit power at the BS.

\subsection{CW-DGC BD-RIS Design}

Problem (\ref{eq:problem0}) is a joint transmit precoder and BD-RIS design with complicated objective function and constraints. To simplify the design, we transform problem (\ref{eq:problem0}) into a more tractable multi-block problem based on fractional programming \cite{shen2018fractional}, which 1) takes the ratio parts of the objective out of the $\log(\cdot)$ functions and 2) transforms the ratio terms into integral expressions by introducing auxiliary vectors ${\bm \iota} \triangleq [\iota_1, \ldots, \iota_K]^T \in \mathbb{R}^K$ and ${\bm \tau} \triangleq [\tau_{1}, \ldots, \tau_{K}]^T \in \mathbb{C}^K$, yielding
\begin{subequations}
    \label{eq:problem1}
    \begin{align}
        \non
        \max_{\substack{\mathbf{W}, \mathbf{\Phi}_\mathrm{t}, \mathbf{\Phi}_\mathrm{r}\\{\bm \iota}, {\bm \tau}}} &\sum_{k\in\mathcal{K}} \Big(\log_2(1 + \iota_{k}) - \iota_{k} + 
        2\Re\{\tilde{\tau}_{k}^*\tilde{\mathbf{h}}_{k}^H\mathbf{w}_{k}\} \\
        \label{eq:p1_a}
        &- \sum_{p\in\mathcal{K}} |\tau_{k}^*\tilde{\mathbf{h}}_{k}^H\mathbf{w}_{p}|^2 - |\tau_{k}|^2\sigma_{k}^2\Big)\\
        \label{eq:p1_b}
        \mathrm{s.t.} ~~ & \textrm{(\ref{eq:p0_b})-(\ref{eq:p0_g})},
    \end{align}
\end{subequations}
where $\tilde{\tau}_{k} = \sqrt{1 + \iota_{k}}\tau_{k}$, $\forall k \in \mathcal{K}$.
Then problem (\ref{eq:problem1}) can be efficiently solved by iteratively updating four blocks, i.e., ${\bm \iota}$, ${\bm \tau}$, $\mathbf{W}$, and $\{\mathbf{\Phi}_\mathrm{t}, \mathbf{\Phi}_\mathrm{r}\}$, until convergence. 
Sub-problems regarding blocks ${\bm \iota}$, ${\bm \tau}$, and $\mathbf{W}$ are all convex optimizations whose closed-form solutions can be easily obtained, while the design of block $\{\mathbf{\Phi}_\mathrm{t}, \mathbf{\Phi}_\mathrm{r}\}$ is challenging due to the newly introduced constraints (\ref{eq:p0_b})-(\ref{eq:p0_f}). 
Therefore, in this paper we omit the details for updating the above three blocks due to the space limitation and focus on the design of block $\{\mathbf{\Phi}_\mathrm{t}, \mathbf{\Phi}_\mathrm{r}\}$. 

When $\mathbf{W}$, ${\bm \tau}$, and ${\bm \iota}$ are fixed, the first two and the last terms in objective (\ref{eq:p1_a}) are constants and can be removed. 
Therefore, the objective function with respect to $\mathbf{\Phi}_\mathrm{t}$ and $\mathbf{\Phi}_\mathrm{r}$ is
\begin{subequations}\label{eq:sub_phi}
    \begin{align}           
        &\sum_{k\in\mathcal{K}}\Big(\sum_{p\in\mathcal{K}} |\tau_{k}^*\mathbf{h}_{k}^H\mathbf{\Phi}_i\mathbf{G}\mathbf{w}_{p}|^2 - 2\Re\{\tilde{\tau}_{k}^*\mathbf{h}_{k}^H\mathbf{\Phi}_i\underbrace{\mathbf{G}\mathbf{w}_{k}}_{=\mathbf{g}_{k}}\}\Big) \\
        \non
        = &\sum_{i\in\{\mathrm{t,r}\}}\Big(\mathsf{Tr}(\mathbf{\Phi}_i\underbrace{\sum_{p\in\mathcal{K}} \mathbf{g}_p\mathbf{g}_p^H}_{=\mathbf{Y}}\mathbf{\Phi}_i^H\underbrace{\sum_{k\in\mathcal{K}_i}|\tau_k|^2\mathbf{h}_k\mathbf{h}_k^H}_{=\mathbf{Z}_i})\\
        \label{eq:sub_phi_b}
        &~~~- 2\Re\{\mathsf{Tr}(\mathbf{\Phi}_i\underbrace{\sum_{k\in\mathcal{K}_i}\mathbf{g}_k\mathbf{h}_k^H\tilde{\tau}_k^*}_{=\mathbf{X}_i})\} \Big)\\
        \non
        \overset{\textrm{(a)}}{=} &\sum_{i\in\{\mathrm{t,r}\}}\sum_{g\in\mathcal{G}} \Big(\mathsf{Tr}(\mathbf{\Phi}_{i,\mathcal{D}_g}\sum_{p\in\mathcal{G}} \mathbf{Y}_{\mathcal{D}_g, \mathcal{D}_p}\mathbf{\Phi}_{i,\mathcal{D}_p}^H \mathbf{Z}_{i,\mathcal{D}_p, \mathcal{D}_g})\\ 
        \label{eq:sub_phi_c}
        &~~~-2\Re\{\mathsf{Tr}(\mathbf{\Phi}_{i,\mathcal{D}_g}\mathbf{X}_{i,\mathcal{D}_g})\}\Big)\\
        \non
        \overset{\textrm{(b)}}{\approx} &\sum_{g\in\mathcal{G}}\sum_{i\in\{\mathrm{t,r}\}}\Big(\mathsf{Tr}(\mathbf{\Phi}_{i,\mathcal{D}_g} \mathbf{Y}_{\mathcal{D}_g, \mathcal{D}_g}\mathbf{\Phi}_{i,\mathcal{D}_g}^H\mathbf{Z}_{i,\mathcal{D}_g, \mathcal{D}_g})\\
        &~~~- 2\Re\{\mathsf{Tr} (\mathbf{\Phi}_{i,\mathcal{D}_g}\mathbf{X}_{i,\mathcal{D}_g})\}\Big)\\
        \label{eq:sub_phi_d}
        = &\sum_{g\in\mathcal{G}} f_g(\mathbf{\Phi}_{\mathrm{t},\mathcal{D}_g}, \mathbf{\Phi}_{\mathrm{r},\mathcal{D}_g}),
    \end{align}
\end{subequations}
where (a) holds by defining matrices $\mathbf{X}_{i,\mathcal{D}_g}$, $\mathbf{Y}_{\mathcal{D}_p, \mathcal{D}_g}$ and $\mathbf{Z}_{i,\mathcal{D}_p,\mathcal{D}_g}$, $\forall g\in\mathcal{G}$, $\forall i\in\{\mathrm{t,r}\}$, which are sub-matrices selecting rows and columns according to $\mathcal{D}_g$ from $\mathbf{X}_i$, and selecting rows according to $\mathcal{D}_p$ and columns according to $\mathcal{D}_g$ from $\mathbf{Y}$ and $\mathbf{Z}_i$, $\forall i\in\{\mathrm{t,r}\}$, respectively; (b) holds by the finding (based on large amount of simulations) that the value of the term $\sum_{p \ne g}\mathbf{Y}_{\mathcal{D}_g, \mathcal{D}_p}\mathbf{\Phi}_{i,\mathcal{D}_p}^H\mathbf{Z}_{i,\mathcal{D}_p, \mathcal{D}_g}$ is negligible compared to $\mathbf{X}_{i,\mathcal{D}_g}$ due to the severe double fading of the cascaded BS-RIS-user channels.
Based on the above derivations, the sub-problem with respect to $\mathbf{\Phi}_\mathrm{t}$ and $\mathbf{\Phi}_\mathrm{r}$ is
\begin{equation}\label{eq:sub_phi1}
        \min_{\substack{\mathcal{D}_g, \forall g\in\mathcal{G}\\ \mathbf{\Phi}_{\mathrm{t},\mathcal{D}_g}, \mathbf{\Phi}_{\mathrm{r},\mathcal{D}_g}, \forall g\in\mathcal{G}}} \sum_{g\in\mathcal{G}}f_g(\mathbf{\Phi}_{\mathrm{t},\mathcal{D}_g}, \mathbf{\Phi}_{\mathrm{r},\mathcal{D}_g})
        ~~\text{s.t.} ~\textrm{(\ref{eq:p0_b})-(\ref{eq:p0_f})}.
\end{equation}
Problem (\ref{eq:sub_phi1}) is still difficult to solve due to the coupling of grouping $\mathcal{D}_1, \ldots, \mathcal{D}_G$ and non-zero parts of the BD-RIS matrices $\mathbf{\Phi}_{\mathrm{t/r},\mathcal{D}_g}$, $\forall g\in\mathcal{G}$. To further simplify the design, we propose to iteratively\footnote{Note that the initialization of the dynamic grouping subsets, the BD-RIS matrices and the transmit precoder is required. For simplicity, we initialize dynamic grouping sets as $\mathcal{D}_g = \mathcal{G}_g$, $\forall g\in\mathcal{G}$, the BD-RIS matrices as diagonal ones with non-zero entries having constant magnitude $\frac{1}{\sqrt{2}}$ and random phase, and the transmit precoder as zero-forcing precoder.} update $\mathcal{D}_g$, $\forall g\in\mathcal{G}$ and $\mathbf{\Phi}_{\mathrm{t/r},\mathcal{D}_g}$, $\forall g\in\mathcal{G}$, until the convergence is achieved.

\subsubsection{Dynamic Grouping Optimization} 
When RIS matrices are fixed, the grouping problem is 
\begin{equation}
    \label{eq:sub_dynamic_grouping_objective}
        \min_{\mathcal{D}_g, \forall g\in\mathcal{G}} ~ \sum_{g\in\mathcal{G}} f_g(\mathbf{\Phi}_{\mathrm{t},\mathcal{D}_g}, \mathbf{\Phi}_{\mathrm{r},\mathcal{D}_g})~~
        \mathrm{s.t.} ~\textrm{(\ref{eq:p0_b})-(\ref{eq:p0_d})}.
\end{equation}
We propose a simple yet practical iterative algorithm to solve the grouping problem. Given a proper initialization of $\mathcal{D}_1, \ldots, \mathcal{D}_G$, we aim to successively determine which group each RIS cell should belong to based on the following steps. 

\textit{Step 1:} For cell $m$, $\forall m\in\mathcal{M}$, we first check which group this cell belongs to and tag the exact position as $g_\mathrm{tag}$. To guarantee each group contains at least one BD-RIS cell, we check the size of $\mathcal{D}_{g_\mathrm{tag}}$. When $|\mathcal{D}_{g_\mathrm{tag}}| > 1$, the following Steps 2-4 are executed. 

\textit{Step 2:} We calculate the value of objective (\ref{eq:sub_dynamic_grouping_objective}), which is marked as $\bar{f}_{m,g}$, when cell $m$ is either in group $g_\mathrm{tag}$, i.e., $g=g_\mathrm{tag}$ or moved to other $G-1$ groups, i.e., $g\ne g_\mathrm{tag}$, according to (\ref{eq:bar_f}):
\begin{equation}
\label{eq:bar_f}
    \begin{aligned}
        \bar{f}_{m,g} = \begin{cases}
            \sum_{p\in\mathcal{G}} f_p(\mathbf{\Phi}_{\mathrm{t},\mathcal{D}_p}, \mathbf{\Phi}_{\mathrm{r},\mathcal{D}_p}), &g = g_\mathrm{tag},\\
            f_{g_\mathrm{tag}}(\mathbf{\Phi}_{\mathrm{t},\widetilde{\mathcal{D}}_{g_\mathrm{tag}}^m}, \mathbf{\Phi}_{\mathrm{r},\widetilde{\mathcal{D}}_{g_\mathrm{tag}}^m})\\
              ~~+ f_g(\mathbf{\Phi}_{\mathrm{t},\widehat{\mathcal{D}}_g^m}, \mathbf{\Phi}_{\mathrm{r},\widehat{\mathcal{D}}_g^m}) \\
              ~~+ \sum_{\substack{p\in\mathcal{G}\\ p\ne g\\p\ne g_\mathrm{tag}}} f_p(\mathbf{\Phi}_{\mathrm{t},\mathcal{D}_p}, \mathbf{\Phi}_{\mathrm{r},\mathcal{D}_p}), &g \ne g_\mathrm{tag},
        \end{cases}
    \end{aligned}
\end{equation}
where $\widetilde{\mathcal{D}} _{g_\mathrm{tag}}^m = \mathcal{D}_{g_\mathrm{tag}}\setminus\{m\}$, $\widehat{\mathcal{D}}_g^m = \mathcal{D}_g\cup\{m\}$.

\textit{Step 3:} Among all $\bar{f}_{m,g}$, $\forall g\in\mathcal{G}$, we find the index with the minimum value, i.e.,
\begin{equation}
    \label{eq:find_g}
    g_m^\star = \arg\min_{g} \bar{f}_{m,g}.
\end{equation}

\textit{Step 4:} When $g_m^\star \ne g_\mathrm{tag}$, we update the corresponding dynamic subsets as
\begin{equation}
    \label{eq:update_D}
    \mathcal{D}_{g_m^\star} = \mathcal{D}_{g_m^\star}\cup\{m\}, ~
    \mathcal{D}_{g_\mathrm{tag}} = \mathcal{D}_{g_\mathrm{tag}}\setminus\{m\}.
\end{equation}

Successively updating positions of all $M$
RIS cells, we can obtain the dynamic grouping subsets $\mathcal{D}_1, \ldots, \mathcal{D}_G$.
The computational complexity of the proposed dynamic grouping optimization mainly comes from step 2, which requires around $\mathcal{O}(\sum_{g=1}^G|\mathcal{D}_g|^{2.37})$ due to the matrix multiplication, yielding a worst-case total complexity $\mathcal{O}(M\sum_{g=1}^G|\mathcal{D}_g|^{2.37})$.

\subsubsection{BD-RIS Matrix Optimization} With fixed grouping subsets, the BD-RIS matrix design problem for each group is separated, which is given by
\begin{subequations}
    \label{eq:sub_dynamic_phi_g2}
    \begin{align}
        \min_{\mathbf{\Phi}_{\mathcal{D}_g}} ~ &\mathsf{Tr}(\mathbf{\Phi}_{\mathcal{D}_g} \mathbf{Y}_{\mathcal{D}_g, \mathcal{D}_g}\mathbf{\Phi}_{\mathcal{D}_g}^H\mathbf{Z}_{\mathcal{D}_g})
        - 2\Re\{\mathsf{Tr}(\mathbf{\Phi}_{\mathcal{D}_g}\mathbf{X}_{\mathcal{D}_g})\}\\
        \mathrm{s.t.} ~~ & \mathbf{\Phi}_{\mathcal{D}_g}^H\mathbf{\Phi}_{\mathcal{D}_g} = \mathbf{I}_{|\mathcal{D}_g|},
    \end{align}
\end{subequations}
where the fresh notations are respectively defined as $\mathbf{\Phi}_{\mathcal{D}_g} \triangleq [\mathbf{\Phi}_{\mathrm{t},\mathcal{D}_g}^H, \mathbf{\Phi}_{\mathrm{r},\mathcal{D}_g}^H]^H \in \mathbb{C}^{2|\mathcal{D}_g|\times |\mathcal{D}_g|}$, $\mathbf{X}_{\mathcal{D}_g} \triangleq [\mathbf{X}_{\mathrm{t},\mathcal{D}_g},\mathbf{X}_{\mathrm{r},\mathcal{D}_g}] \in \mathbb{C}^{|\mathcal{D}_g|\times 2|\mathcal{D}_g|}$, and $\mathbf{Z}_{\mathcal{D}_g} \triangleq \mathsf{blkdiag} (\mathbf{Z}_{\mathrm{t},\mathcal{D}_g,\mathcal{D}_g}, \mathbf{Z}_{\mathrm{r},\mathcal{D}_g,\mathcal{D}_g}) \in \mathbb{C}^{2|\mathcal{D}_g|\times 2|\mathcal{D}_g|}$.
Problem (\ref{eq:sub_dynamic_phi_g2}) is an unconstrained optimization on the Stiefel manifold and can be solved by typical conjugate-gradient methods on the manifold space \cite{absil2009optimization} with complexity $\mathcal{O}(I_{\mathrm{cg},g}|\mathcal{D}_g|^3)$, where $I_{\mathrm{cg},g}$, $\forall g\in\mathcal{G}$ denotes the number of iterations. 
After solving (\ref{eq:sub_dynamic_phi_g2}), BD-RIS matrices for each group are given by
\begin{equation}
    \label{eq:opt_phi_dynamic_g}
    \begin{aligned}
    &\mathbf{\Phi}_{\mathrm{t}, \mathcal{D}_g}^\star = [\mathbf{\Phi}_{\mathcal{D}_g}^\star]_{1:|\mathcal{D}_g^\star|,:}, \forall g \in \mathcal{G},\\
    &\mathbf{\Phi}_{\mathrm{r}, \mathcal{D}_g}^\star = [\mathbf{\Phi}_{\mathcal{D}_g}^\star]_{|\mathcal{D}_g^\star| + 1 : 2|\mathcal{D}_g^\star|,:}, \forall g \in \mathcal{G}.
    \end{aligned}
\end{equation}
With $\mathbf{\Phi}_{\mathrm{t}, \mathcal{D}_g}^\star$ and $\mathbf{\Phi}_{\mathrm{r}, \mathcal{D}_g}^\star$, $\forall g \in \mathcal{G}$, we need to perform a restoring operation to get the original $\mathbf{\Phi}_\mathrm{t}^\star$ and $\mathbf{\Phi}_\mathrm{r}^\star$ satisfying (\ref{eq:p0_e}) and (\ref{eq:p0_e1}). To this end, we introduce $G$ vectors $\mathbf{d}_g \in \mathbb{R}^{|\mathcal{D}_g|}$, $\forall g\in\mathcal{G}$, which list the indexes in $\mathcal{D}_g$ in ascending order. Then we have
\begin{equation}
    \label{eq:reduction}
    \begin{aligned}
    &[\mathbf{\Phi}_{\mathrm{t}}^\star]_{\mathbf{d}_g^T,\mathbf{d}_g^T} = \mathbf{\Phi}_{\mathrm{t}, \mathcal{D}_g}^\star,
    [\mathbf{\Phi}_{\mathrm{r}}^\star]_{\mathbf{d}_g^T,\mathbf{d}_g^T} = \mathbf{\Phi}_{\mathrm{r}, \mathcal{D}_g}^\star, \forall g \in \mathcal{G}.
    \end{aligned}
\end{equation}

\subsubsection{Summary} After iteratively updating grouping subsets and BD-RIS matrices until some convergence thresholds are achieved, we can get $\mathcal{D}_1^\star, \ldots, \mathcal{D}_G^\star$ and corresponding BD-RIS matrices.
For clarity, the CW-DGC BD-RIS design is summarized in Algorithm \ref{alg:dynamic_group}.
The complexity of Algorithm \ref{alg:dynamic_group} mainly comes from the BD-RIS matrix optimization, yielding a total complexity $\mathcal{O}(I\sum_{g=1}^GI_{\mathrm{cg},g}|\mathcal{D}_g|^3)$, where $I$ denotes the number of iterations of Algorithm \ref{alg:dynamic_group}. 
Specifically, when the BD-RIS has a CW-GC architecture, there is no need to do iterations and the complexity reduces to $\mathcal{O}(\sum_{g=1}^GI_{\mathrm{cg},g}\bar{M}^3)$.

\begin{algorithm}[t]
    \caption{CW-DGC BD-RIS Design}
    \label{alg:dynamic_group}
    \begin{algorithmic}[1]
        \REQUIRE $\mathbf{h}_{k}$, $\forall k \in \mathcal{K}$, $\mathbf{G}$, ${\bm \iota}$, ${\bm \tau}$, $\mathbf{W}$, $\mathbf{\Phi}_\mathrm{t}$, $\mathbf{\Phi}_\mathrm{r}$, $\mathcal{D}_1, \ldots, \mathcal{D}_G$.
        \ENSURE $\mathbf{\Phi}_\mathrm{t}^\star$, $\mathbf{\Phi}_\mathrm{r}^\star$.
            \STATE {Calculate $\mathbf{X}_\mathrm{t/r}$, $\mathbf{Y}$, $\mathbf{Z}_\mathrm{t/r}$.}
            \WHILE {no convergence of objective (\ref{eq:sub_phi_d}) }
                \FOR {$m = 1 : M$}
                    \STATE {Find $m \in \mathcal{D}_g$ and set $g_\mathrm{tag} = g$.}
                    \IF {$|\mathcal{D}_g| > 1$}
                        \STATE {Calculate $\bar{f}_{m,g}$, $\forall g$ by (\ref{eq:bar_f}).}
                        \STATE {Find $g_m^\star$ by solving $g_m^\star = \arg\min_{g} \bar{f}_{m,g}$.}
                        \IF {$g_m^\star \ne g_\mathrm{tag}$}
                            \STATE {Update $\mathcal{D}_{g_m^\star} = \mathcal{D}_{g_m^\star}\cup\{m\}$.} 
                            \STATE {Update $\mathcal{D}_{g_\mathrm{tag}} = \mathcal{D}_{g_\mathrm{tag}}\setminus\{m\}$.}
                        \ENDIF
                    \ENDIF
                \ENDFOR
                \STATE {Set $\mathbf{\Phi}_\mathrm{t} = \mathbf{0}$, $\mathbf{\Phi}_\mathrm{r} = \mathbf{0}$.}
                \FOR {$g = 1 : G$}
                    \STATE {Split $\mathbf{X}_{\mathcal{D}_g}$, $\mathbf{Y}_{\mathcal{D}_g, \mathcal{D}_g}$, $\mathbf{Z}_{\mathcal{D}_g}$.}
                    \STATE {Solve problem (\ref{eq:sub_dynamic_phi_g2}) by the manifold method \cite{absil2009optimization}.}
                    \STATE {Obtain $\mathbf{\Phi}_{\mathrm{t}, \mathcal{D}_g}^\star$ and $\mathbf{\Phi}_{\mathrm{r}, \mathcal{D}_g}^\star$ by (\ref{eq:opt_phi_dynamic_g}).}
                    \STATE {Restore $\mathbf{\Phi}_\mathrm{t}^\star$ and $\mathbf{\Phi}_\mathrm{r}^\star$ by (\ref{eq:reduction}).}
                \ENDFOR
            \ENDWHILE
        \STATE {Return $\mathbf{\Phi}_\mathrm{t}^\star$, $\mathbf{\Phi}_\mathrm{r}^\star$.}
    \end{algorithmic}
\end{algorithm}

\section{Performance Evaluation}

In this section, we present simulation results to demonstrate the performance of the proposed design. 
In the following simulations, we assume Rician fading channels for the BS-RIS link and Rayleigh fading channels for RIS-user links. In addition, we adopt the typical distance-dependent pathloss model to account for the large-scale fading for BS-RIS link, i.e.,  $\zeta_\mathrm{BI} = \zeta_0(\frac{d_\mathrm{BI}}{d_0})^{-\varepsilon_\mathrm{BI}}$ and RIS-user link, i.e., $\zeta_\mathrm{IU} = \zeta_0(\frac{d_\mathrm{IU}}{d_0})^{-\varepsilon_\mathrm{IU}}$, where $\zeta_0 = -30$ dB is the signal attenuation at a reference distance $d_0 = 1$ m. $d_\mathrm{BI} = 100$ m and $d_\mathrm{IU} = 10$ m are distances between the BS and RIS, and between the RIS and users, respectively. $\varepsilon_\mathrm{BI}$ and $\varepsilon_\mathrm{IU}$ are path loss exponents for BS-RIS and RIS-user links, respectively. The noise power is set to $\sigma^2_k = -80$ dBm, $\forall k\in\mathcal{K}$.

Fig. \ref{fig:grouping_comparison} shows a simulation result when BD-RIS has 36 cells ($6\times 6$ square). Three fixed grouping strategies, namely ``Horizontal'', ``Vertical'', and ``Interlaced'', are used in the simulation as illustrated in \ref{fig:fixed_grouping}, where cells with the same color belong to the same group. 
For comparison, we also add the performance of CW-FC and CW-SC architectures as upper- and lower-bounds, respectively \cite{li2022generalized}.
We plot sum-rate performance achieved by different architectures of BD-RIS in Fig. \ref{fig:SR_P}. 
It can be shown that our proposed dynamic grouping strategy outperforms three CW-GC cases. 
In addition, there are no obvious performance gaps among three CW-GC architectures. 
These two observations can be explained as follows. 1) With Rayleigh fading channels between the RIS and users and fixed group size, the positional information for users is not a decisive factor. 2) The change of group size, however, boosts the performance by adapting to the strength of different channel coefficients. Therefore, we can deduce from Fig. \ref{fig:grouping_comparison} that the performance of CW-GC BD-RIS in this case depends more on the change of group size than on locations of cells.

\begin{figure}
    \centering
    \subfigure[Three fixed grouping strategies with $G = 18$ and $G = 12$]{
    \label{fig:fixed_grouping}
    \includegraphics[width = 0.46\textwidth]{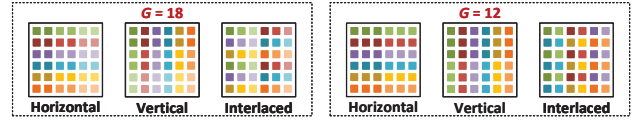}}
    \subfigure[Sum-rate versus transmit power $P$. Left: $G=18$; right: $G = 12$.]{
    \label{fig:SR_P}
    \includegraphics[width = 0.455\textwidth]{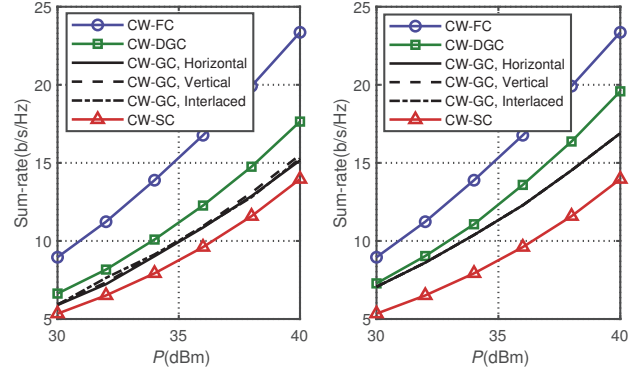}}
    \caption{Comparison among different grouping strategies ($M = 36$, $N = K = 6$, $K_\mathrm{t} = K_\mathrm{r} = 3$).}
    \label{fig:grouping_comparison}\vspace{-0.1 cm}
\end{figure}

In Fig. \ref{fig:SR_G}\footnote{Given that three fixed grouping strategies have similar performance, in Fig. \ref{fig:SR_G} we only plot the horizontal case, which is marked as ``CW-GC''.}, we show the sum-rate performance as a function of the number of groups $G$. 
With increasing number of groups, equivalently the decrease of the number of non-zero elements of the BD-RIS matrix, the performance gap between CW-DGC and CW-GC architectures becomes larger. 
For example, the performance achieved by a 36-cell CW-DGC architecture improves by 12\% when $G = 12$ while by 15\% with $G = 16$; the performance of a 64-cell CW-DGC architecture improves by 13\% when $G=16$ and by 21\% when $G=32$.  
This fact demonstrates that the benefit of dynamic grouping is much more obvious when the size of each group is relatively small.

Finally, Fig. \ref{fig:SR_M} shows the sum-rate versus the number of cells. We can observe that the proposed CW-DGC architecture always outperforms the CW-GC architecture. More importantly, with relatively large number of cells, e.g., $M=100$ in Fig. \ref{fig:SR_M}, the CW-DGC architecture can achieve performance close to the CW-FC case even when $G=25$. This observation shows the benefit of CW-DGC architecture in achieving satisfactory performance with a reduced control complexity.

\begin{figure}
    \centering
    \includegraphics[width = 0.48\textwidth]{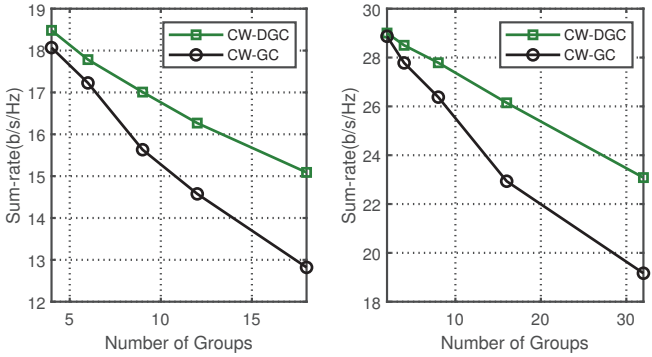}
    \caption{Sum-rate versus the number of groups $G$. Left:  $M = 36$; right:  $M = 64$. ($N = K = 6$, $K_\mathrm{t} = K_\mathrm{r} = 3$, $P = 38$ dBm)}
    \label{fig:SR_G}\vspace{-0.1 cm}
\end{figure}

\begin{figure}
    \centering
    \includegraphics[width = 0.475\textwidth]{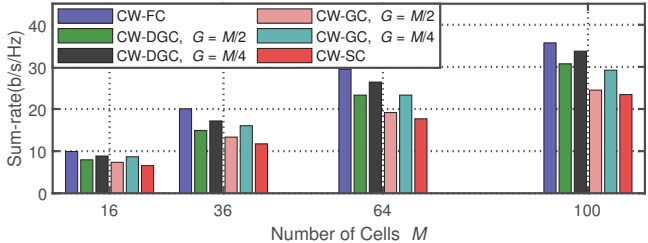}
    \caption{Sum-rate versus the number of cells $M$. ($N = K = 6$, $K_\mathrm{t} = K_\mathrm{r} = 3$, $P = 38$ dBm)}
    \label{fig:SR_M}\vspace{-0.1 cm}
\end{figure}

\section{Conclusions}
\label{sc:Conclusions}

In this paper, we propose a novel CW-DGC architecture for BD-RIS with the hybrid mode. To show the advantage of the proposed architecture, we deploy the CW-DGC BD-RIS into a MU-MISO communication system, and consider the joint dynamic grouping and BD-RIS matrix design to maximize the sum-rate performance. To efficiently solve the problem, we propose a simple yet practical algorithm to iteratively update dynamic grouping and BD-RIS matrix. Finally, simulation results demonstrate that the proposed CW-DGC architecture outperforms fixed CW-GC architectures. In addition, with Rayleigh fading channels between the RIS and users, the change of group size has a greater impact than that of cell locations on the achievable performance.
In the future, there are many interesting topics worth studying, such as more efficient beamforming design and hardware impairments of BD-RIS implementation.

\begin{appendix}[Proof of Corollary 1]
    Based on the derivations in \cite{li2022generalized}, since we consider the cell connection topology, while the antenna ports within the same cell are always connected to each other to support the hybrid mode, $\mathbf{\Phi}_\mathrm{r}$ and $\mathbf{\Phi}_\mathrm{t}$ share the same mathematical characteristics. Meanwhile, based on the circuit topology analysis in \cite{shen2021modeling}, when cell $m$ and cell $n$, $m\ne n$, are connected to each other by reconfigurable impedance components, the corresponding $(m,m)$-th, $(m,n)$-th, $(n,m)$-th, and $(n,n)$-th entries of $\mathbf{\Phi}_\mathrm{r}$ and $\mathbf{\Phi}_\mathrm{t}$ are nonzero; when cell $m$ and cell $n$ are not connected to each other, the corresponding $(m,n)$-th and $(n,m)$-th entries are zeros. Combining the above statements and Definition 1, we prove constraints (\ref{eq:dynamic_constraint_a}) and (\ref{eq:dynamic_constraint_b}). Constraint (\ref{eq:dynamic_constraint_c}) is straightforward given constraint (\ref{eq:RIS_constraint}) and that cells in the same group essentially construct a cell-wise fully-connected architecture with a reduced dimension. 
\end{appendix}

\bibliographystyle{IEEEtran}
\bibliography{references}

\end{document}